\documentclass[lettersize,journal]{IEEEtran}
\IEEEoverridecommandlockouts

\usepackage{amssymb}
\usepackage[cmex10]{amsmath}
\usepackage{stfloats}
\usepackage{graphicx}
\usepackage{subfigure}
\usepackage{tabularx}
\usepackage{epsfig,epsf,color,balance,cite}
\usepackage{verbatim}
\usepackage{url}
\usepackage{bm}
\usepackage{booktabs}
\usepackage{siunitx}
\usepackage{amsmath}
\allowdisplaybreaks[2]
\usepackage{ntheorem}

\newtheorem{lemma}{Lemma}
\newtheorem*{proof}{Proof}

\usepackage{algorithm}
\usepackage{algorithmic}
\usepackage{color}
\definecolor{myc1}{rgb}{0,0,0}

\begin{document}

% % paper title
\title{Joint Communication and Computation Design for Mobile Embodied AI Network (MEAN)}

\author{Chenliang Wu,
        Zhouxiang Zhao,
        Jiaxiang Wang,
        Ruopeng Xu,
        Chen Zhu,\\
        Zhaohui Yang,
        and Zhaoyang Zhang,~\IEEEmembership{Senior Member,~IEEE}

% % \thanks{This work is supported by . \textit{(Corresponding author: Zhaohui Yang)}.}
\thanks{Chenliang Wu, Zhouxiang Zhao, Ruopeng Xu, Zhaohui Yang, and Zhaoyang Zhang are with the College of Information Science and Electronic Engineering, Zhejiang University, Hangzhou 310027, China (e-mails: \{chenliangwu, zhouxiangzhao, ruopengxu, yang\_zhaohui, ning\_ming\}@zju.edu.cn).}
\thanks{Jiaxiang Wang is with the Department of Engineering, King's College London, London, UK. (email: jiaxiang.wang@kcl.ac.uk).}
\thanks{Chen Zhu is with the School of Communication Engineering, Hangzhou Dianzi University, Hangzhou 310018, China, and also with the Polytechnic Institute, Zhejiang University, Hangzhou 310015, China (e-mail: zhuc@zju.edu.cn).}
% % \thanks{W. Xu is with National Mobile Communications Research Laboratory, Southeast University, Nanjing 211189, China (e-mail: wxu@seu.edu.cn).}
\vspace{-1em}
}

% make the title area
\maketitle

\begin{abstract}
This letter investigates the problem of energy efficient collaborative strategy for mobile embodied artificial intelligence network (MEAN) over wireless communication. In the considered model, the agents execute the tasks through collaboration, and they can switch between two operating modes based on the signal-to-noise ratio (SNR) and global collaboration. The dual-mode comprises the base station (BS)-assisted collaborative mode, in which agents make decisions through semantic communication with BS and then collaborate on tasks, and the local computing mode, in which the agents make decisions and execute tasks independently. Due to the dynamic wireless communication and flexible collaboration strategy, we jointly consider computation energy, communication energy, and task-execution energy with specific collaborative gains into a mixed-integer nonlinear programming (MINLP) optimization problem whose goal is to minimize the total system energy consumption. To solve it, we propose a lower-complexity enumeration algorithm: first, we get the optimal closed-form solution for semantic compression ratio and transmit power by proving the strict convexity. Second, we determine the scale of collaboration and the operating mode of each agent by a greedy sorting algorithm based on individual energy-saving potentials. Simulation results show that the proposed algorithm can significantly reduce the total energy consumption compared to benchmark schemes.

\end{abstract}

\begin{IEEEkeywords}
Multi-agent system, embodied AI, collaborative strategy, energy efficiency.
\end{IEEEkeywords}

\IEEEpeerreviewmaketitle

\section{Introduction}
\IEEEPARstart{D}{riven} by rapid advancements in large language models (LLMs) and robotics, the agents equipped with autonomous analytical capabilities have garnered significant research attention \cite{ijcai2024p890}. The mobile embodied artificial intelligence network (MEAN) consists of multiple such agents and is capable of solving various complex problems more efficiently and robustly through collaboration by information exchange and task allocation among the agents \cite{11242028,zhao2026agenticaiempoweredwirelessagent}. Due to the benefits of collaboration, MEAN outperforms single agent in these complex applications including uncrewed aerial vehicle (UAV) swarms for search operations \cite{10254323}, multi-quadruped robot teams for scene modeling \cite{10265226}, multi-manipulator assembly lines \cite{10103893} and so on. However, many existing control frameworks simplify the system design by assuming ideal communication conditions \cite{yang2025multiagent}, which only concentrate on how to effectively allocate task resources to promote collaboration in MEAN \cite{10240696}.

In practical deployments, information exchange among agents is severely bottle-necked by unreliable wireless communication with finite bandwidth, multi-path fading, and dynamic interference \cite{11006980}. Although wireless communication involves uncertainties, the mobility of MEAN requires the use of wireless communication for collaboration, especially in outdoor tasks. Therefore, it is a challenge for energy efficient collaboration in MEAN through wireless communicaiton. On the one hand, the agent exchanged large date size with images and videos, which will result in significant latency without data processing, and outdated data may hinder collaboration. On the other hand, the energy of the agent is limited, so the energy efficiency must be improved to meet the demands of computing, communications, and task execution. To face the challenge, semantic communication is considered one of the key solutions, which can reduce the communication payload and transmission energy by extracting and transmitting only task-relevant semantic features instead of raw sensory data compared to traditional bit-level communication \cite{wang2025generative}. Meanwhile, the appropriate collaborative strategy can also save energy \cite{11022590}.

Therefore, combining semantic communication and flexible collaboration strategy, this letter proposes an energy efficient collaborative strategy for MEAN over wireless communication to maximize energy efficiency. The main contributions of this work are summarized as follows:
\begin{itemize}
    \item we investigates a MEAN over wireless communication which consist of a base station (BS) and a set of agents. The agent can switch between BS-assisted collaborative mode and local computing mode, which based on the signal-to-noise ratio (SNR) and global collaboration.
    \item Taking into account the benefits of collaboration gains, we formulate a joint optimization problem encompassing the semantic compression ratio, transmit power, and operating mode selection to maximize the energy efficiency under processing time constraints, which is a mixed-integer nonlinear programming (MINLP) problem.
    \item To solve the MINLP problem, we propose a lower-complexity enumeration algorithm: optimizes the continuous computation and communication resources, and determines the discrete operating modes through a sorting strategy based on optimal energy-saving potentials.
\end{itemize}

\section{System Model and Problem Formulation}\label{Sec:smpf}
We consider an uplink communication with MEAN, which consists of a BS and a set of agents indexed by $\mathcal{N}=\{1,2,\dots,N\}$. Each agent is capable of sensing, semantic communication, local computation and task execution, while the BS serves as a central node for data aggregation and global collaboration. During one collaboration round in a task, each agent first processes its sensory data within limit processing time \(T_0\), and then executes the assigned task. Our goal is to minimize the energy consumption of the MEAN.

To account for the dynamic wireless communicaiton and flexible collaboration strategy, we adopt a dual-mode operation strategy based on SNR and global collaboration. We use a binary index $x_i \in \{0,1\}$ to denote the operate mode of agent $i$. As shown in Fig.~\ref{fig:SystemModel}, \(x_i=1\) indicates the BS-assisted collaborative mode, in which agent $i$ semantically communicates the sensed data with the BS and executes the task with global collaboration. In contrast, \(x_i=0\) indicates the local computing mode, in which agent $i$ processes the sensed data and executes the task independently.

We assume the the channel $h_i$ between the BS and agent $i$ remains constant during the limit processing time \(T_0\). Then, the maximum achievable SNR of agent $i$ is $\gamma_i=\frac{P_{\max}|h_i|^2}{\sigma^2}$, where $\sigma^2$ is the additional white Gaussian noise and $P_{\mathrm{max}}$ is the maximum transmit power shared by all agents. We define the channel-feasibility indicator as
\begin{equation}
a_i=
\begin{cases}
1, & \gamma_i\ge \gamma_{\mathrm{th}},\\
0, & \gamma_i<\gamma_{\mathrm{th}},
\end{cases}
\label{eq:a_i}
\end{equation}
where $\gamma_{\mathrm{th}}$ is a predefined threshold which determines whether BS-assisted collaborative mode is feasible.

When $a_i=1$, agent $i$ experiences a favorable SNR and is able to operate in either the BS-assisted collaborative mode or the local computing mode, which depends on the global collaboration. Conversely, when $a_i=0$, agent $i$ experiences a unfavorable SNR and is restricted to the local computing mode. The correspondence between $x_i$ and $a_i$ is $x_i \le a_i$.

\begin{figure*}
    \centering
    \includegraphics[width=1\linewidth]{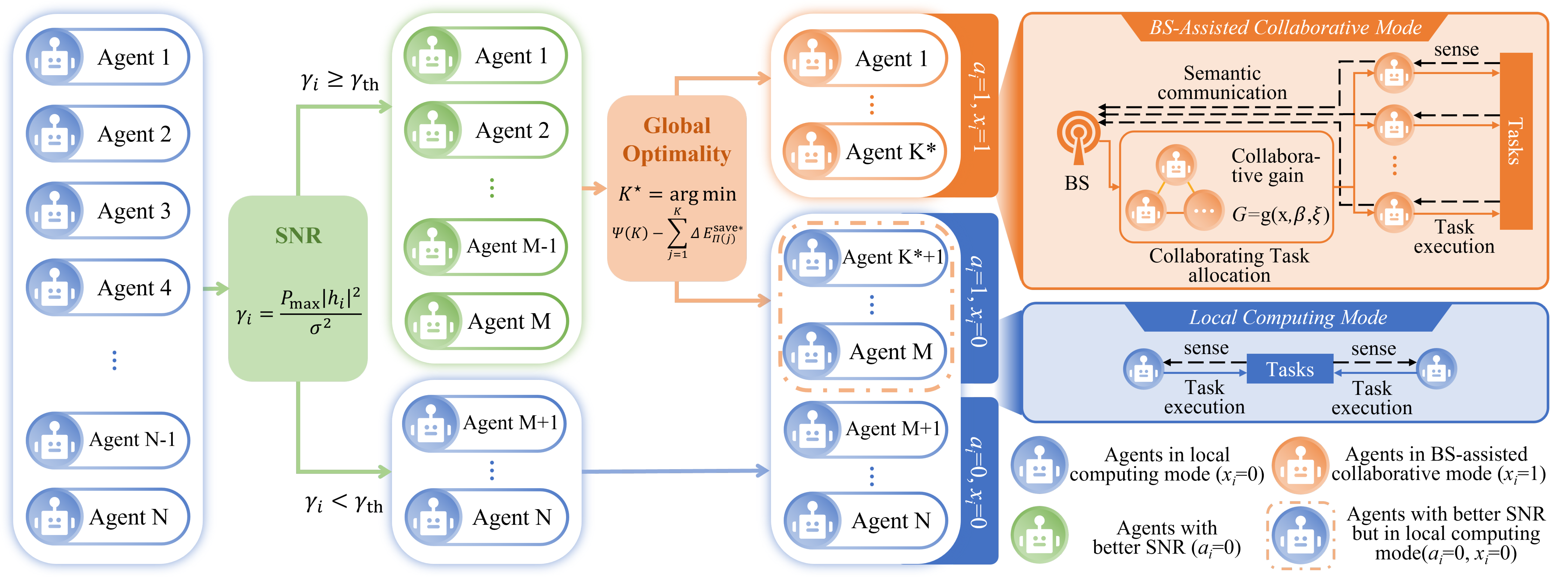}
    \caption{System model and the dual-mode operation strategy in MEAN.}
    \label{fig:SystemModel}
    \vspace{-1em}
\end{figure*}

\subsection{Data Process Model}
In BS-assisted collaborative mode, agent $i$ performs semantic extraction on raw sensory data of size \(D_i\) bits. If we denote the compressed semantic data by $C_i$, the compression ratio $\rho_i$ can be defined as $\rho_i=\frac{C_i}{D_i}$. The semantic compression time $t_i^{\mathrm{comp}}$ is modeled as
\begin{equation}
t_i^{\mathrm{comp}}=
\frac{\alpha_i D_i \ln(1/\rho_i)}{f_i},
\label{eq:t_comp}
\end{equation}
where \(\alpha_i\) is the computational complexity coefficient for semantic extraction and \(f_i\) is the local CPU frequency. 

Then, agent $i$  semantically communicates with the BS, and the uplink transmission achievable rate \cite{10915662} is given by
\begin{equation}
R_i(p_i)=B\log_2\!\left(1+\frac{p_i |h_i|^2}{\sigma^2}\right),
\label{eq:R_i}
\end{equation}
where \(B\) is the channel bandwidth and \(p_i\) is the transmit power of agent \(i\).

The corresponding communication time $t_i^{\mathrm{comm}}$ is
\begin{equation}
t_i^{\mathrm{comm}}=
\frac{C_i}{R_i(p_i)}
=\frac{\rho_i D_i}{B\log_2\!\left(1+\frac{p_i |h_i|^2}{\sigma^2}\right)}.
\label{eq:t_comm}
\end{equation}

Therefore, the total data process time $t_i^{\mathrm{BS}}$ of agent $i$ is
\begin{equation}
\begin{split}
t_i^{\mathrm{BS}}&=t_i^{\mathrm{comp}}+t_i^{\mathrm{comm}} \\
&= \frac{\alpha_i D_i \ln(1/\rho_i)}{f_i}
 +\frac{\rho_i D_i}{B\log_2\!\left(1+\frac{p_i |h_i|^2}{\sigma^2}\right)}.
\end{split}
\label{eq:t_bs}
\end{equation}

In local computing mode, agent $i$ processes the raw data locally without semantic compression and wireless communicaton. Let \(\tau\) denote the number of CPU cycles required to process one bit of raw data. Therefore, the total data process time $t_i^{\mathrm{local}}$ of agent $i$ is $t_i^{\mathrm{local}} = \frac{\tau D_i}{f_i}$.

\subsection{Energy Consumption Model}

The total energy consumption of each agent \(i\) consists of information-processing energy and task-execution energy.

\subsubsection{Information-Processing Energy}In the BS-assisted collaborative mode, the information-processing energy contains semantic compression energy and uplink transmission energy. The semantic compression energy is $E_i^{\mathrm{comp}}=\kappa\, t_i^{\mathrm{comp}} f_i^3$, where \(\kappa\) is the effective switched capacitance coefficient \cite{10915662}. The uplink transmission energy is $E_i^{\mathrm{comm}}=p_i t_i^{\mathrm{comm}}$. Therefore, the information-processing energy is
\begin{equation}
\begin{split}
E_i^{\mathrm{BS}}&=E_i^{\mathrm{comp}}+E_i^{\mathrm{comm}} \\
&= \kappa \alpha_i D_i f_i^2 \ln(1/\rho_i) + \frac{p_i \rho_i D_i}{B\log_2\!\left(1+\frac{p_i |h_i|^2}{\sigma^2}\right)}.
\end{split}
\end{equation}

In the local computing mode, agent \(i\) only consumes local computation energy, so the information-processing energy is
\begin{equation}
E_i^{\mathrm{local}}
=\kappa\, t_i^{\mathrm{local}} (f_i)^3
=\kappa \tau D_i f_i^2.
\label{eq:E_loc}
\end{equation}

\subsubsection{Task-Execution Energy}
To mathematically characterize the task-execution energy by collaboration, we use a collaborative gain function to describe the energy consumption. The work in \cite{gunther1993simple} proposes a model for parallel processing performance in distributed systems, called universal scalability law (USL). The collaborative gain is formulated as
\begin{equation}
G(K) = (1-\beta) + \frac{\beta}{K} + \xi(K-1), \label{eq:G_beta}
\end{equation}
where $K =\sum_{i=1}^{N} x_i$ represents the total number of agents in the BS-assisted collaborative mode, and $K \in \{2,3,\dots,M\}$, $M=\sum_{i=1}^{N} a_i$. Collaboration coefficient $\beta \in (0, 1)$ denotes the proportion of the task execution energy that is compressible and dynamically optimized through collaboration, such as area scanning and distributed load carrying. $(1-\beta)$ denotes the lower bound of the energy required for basic mechanical operations regardless of collaboration scale, such as balancing and grabbing. $\xi \ge 0$ represents the contention coefficient, and $\xi(K-1)$ quantifies the additional energy degradation incurred by spatial interference, collision avoidance, and resource synchronization as the cluster size expands.

Therefore, the task execution energy model in dual-mode is
\begin{equation}
    E_{i}^{\text{move}} = 
    \begin{cases}
        Q \cdot G(K), & x_i = 1, \\
        Q, & x_i = 0,
    \end{cases}
\end{equation}
where $Q$ is the independent baseline task execution energy without any global collaboration.

\subsection{Problem Formulation}
Our objective is to jointly optimize the semantic compression ratio, transmit power, and operating mode to minimize the energy consumption of the MEAN in one collaboration round under a time constraint. The optimization problem is given by
\begin{subequations}
\begin{align}
\min_{\mathbf{x}, \boldsymbol{\rho}, \mathbf{p}} \quad &\sum_{i=1}^{N} x_i \left[E_i^{\mathrm{comm}} + E_i^{\mathrm{comp}}+Q\cdot G(K) \right], \nonumber \\ 
&+(1-x_i)(E_i^{\mathrm{local}} + Q), \tag{\theequation} \label{eq:all} \\
\text{s.t.}\quad
& x_i\, t_i^{\mathrm{BS}}
+(1-x_i)\, t_i^{\mathrm{local}}\le T_0,
\quad \forall i\in\mathcal{N},
\label{prob:latency}\\
& x_i \le a_i,\quad \forall i\in\mathcal{N},
\label{prob:channel}\\
& x_i\in\{0,1\},\quad \forall i\in\mathcal{N},
\label{prob:binary}\\
& K \in \{2,3,\dots,\sum_{i=1}^{N} a_i\},\quad \forall i\in\mathcal{N},
\label{prob:K}\\
& \rho_{\min} \le \rho_i\le 1,\quad \forall i\in\mathcal{N},
\label{prob:rho}\\
& 0< p_i\le P_{\max},\quad \forall i\in\mathcal{N},
\label{prob:power}
\end{align}
\end{subequations}
where $T_0$ is the maximum processing time requirement, $P_{\max}$ is the maximum transmit power of the agents, and $\rho_{\min}$ is the minimum compression ratio of the agents. The maximum processing time requirement for each agent is given in \eqref{prob:latency}. \eqref{prob:channel}-\eqref{prob:binary} guarantee that agent \(i\) can choose the BS-assisted collaborative mode only when $a_i=1$. The number of the agents in BS-assisted collaborative mode is given by \eqref{prob:K}. \eqref{prob:rho} represents the compression ratio constraint and \eqref{prob:power} represents the transmit power constraint.

\section{Algorithm Design}

The formulated optimization problem is a MINLP problem. Donate $\Psi(K) = K \cdot Q \cdot (1-G(K))$, and we reorganize the problem \eqref{eq:all} according to $x_i$, the problem can be rewritten as
\begin{equation}
\begin{split}
    \min_{\mathbf{x}, \boldsymbol{\rho}, \mathbf{p}} \quad & \underbrace{\sum_{i=1}^N (E_i^{\mathrm{local}} + Q)}_{\text{energy of agents all in local}} - \underbrace{\Psi(K)}_{\text{collaboration gain for task-execution}} \\
    &+ \sum_{i=1}^N x_i \underbrace{\left( E_i^{\mathrm{BS}} -E_i^{\mathrm{local}} \right)}_{\text{energy-saving potential}}. \label{eq:all1} \\
\end{split}
\end{equation}

In the proposed algorithm, the optimization problem is decoupled into two subproblem: semantic compression ratio and transmit power optimization and operating mode optimization. To optimize $(\mathbf{x}, \boldsymbol{\rho}, \mathbf{p})$ in problem \eqref{eq:all1}, we first optimize the $(\boldsymbol{\rho}, \mathbf{p})$ for agents in BS-assisted collaborative mode. Then, we optimize $(K,\mathbf{x})$ with fixed $(\boldsymbol{\rho}, \mathbf{p})$. 

\subsection{Semantic Compression Ratio and Transmit Power Optimization}

For any agent $i$ operating in the BS-assisted collaborative mode, minimize its energy-saving potential is equivalent to minimize the total energy. Thus, the problem \eqref{eq:all1} reduces to
\begin{subequations}
\begin{align}
\min_{\rho_i, p_i} \quad & E_i^{\text{BS}} = \kappa \alpha_i D_i f_i^2 \ln(1/\rho_i) + \frac{p_i \rho_i D_i}{B \log_2\left(1 + \frac{p_i |h_i|^2}{\sigma^2}\right)}, \tag{\theequation} \label{prob:1.obj} \\
\text{s.t.} \quad & t_i^{\text{BS}} \le T_0, \quad \forall i\in\mathcal{N}, \label{prob:1.1}\\
& \rho_{\min} \le \rho_i\le 1,\quad \forall i\in\mathcal{N},  \label{prob:1.2} \\
& 0 < p_i \le P_{\max} ,\quad \forall i\in\mathcal{N}.  \label{prob:1.3} 
\end{align}
\end{subequations}

At the optimal solution, the total processing time of agent $i$ must strictly satisfy the equality of the maximum latency constraint, so constraint \eqref{prob:1.1} can be formulated as $t_i^{\mathrm{comm}} = T_0 - t_i^{\mathrm{comp}} = T_0 + \frac{\alpha_i D_i}{f_i}\ln \rho_i$. 

To determine the feasible region $\Omega_i$ of $\rho_i$, considering $t_i^{\mathrm{comm}} > 0$, we have a physical lower bound $\rho_{\mathrm{inf}} = \max\left(\rho_{\min}, \exp\left(-\frac{T_0 f_i}{\alpha_i D_i}\right)\right)$. Furthermore, due to constraint \eqref{prob:1.3}, the transmission time must satisfy $t_i^{\mathrm{comm}} \ge \frac{\rho_i D_i}{B \log_2\left(1+\frac{P_{\max}|h_i|^2}{\sigma^2}\right)}$, which lacks a closed-form solution. Thus, we obtain the left-boundary threshold $\rho_{\mathrm{Pmax}}^L$ and $\rho_{\mathrm{Pmax}}^R$, and Consequently, the feasible region is definitively closed as $\Omega_i = [\max(\rho_{\min},\rho_{\mathrm{inf}}, \rho_{\mathrm{Pmax}}^L), \min(\rho_{\mathrm{Pmax}}^R,1)]$. 

Based on $\Omega_i$, if $\max(\rho_{\min},\rho_{\mathrm{inf}}, \rho_{\mathrm{Pmax}}^L) > \min(\rho_{\mathrm{Pmax}}^R,1)$, we have $\Omega_i = \emptyset$, implying that agent $i$ cannot satisfy the processing time requirement under $P_{\max}$, which enforces $a_i=0$, i.e., switching to the local computing mode. If $\rho_{\mathrm{Pmax}}^R > 1$, transmitting raw data inevitably causes timeout under severe channel degradation. Hence, forced semantic compression becomes the unique solution to maintain operations.

By substituting $t_i^{\mathrm{comm}}$, the subproblem is reduced to
\begin{equation}
\begin{split}
     \min_{\rho_i \in \Omega_i} \ E_i^{\text{BS}} = -\kappa \alpha_i D_i f_i^2 \ln \rho_i + \frac{\sigma^2 t_i^{\mathrm{comm}}}{|h_i|^2} \left( 2^{\frac{\rho_i D_i}{B t_i^{\mathrm{comm}}}} - 1 \right). \label{subproblem1}
\end{split}
\end{equation}

\begin{lemma}
    The subproblem \eqref{subproblem1} is strictly convex over $\Omega_i$. \label{lemma:convexity}
\end{lemma}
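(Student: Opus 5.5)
The plan is to treat \eqref{subproblem1} as a function of the single variable $\rho_i$, after substituting $t_i^{\mathrm{comm}}(\rho_i)=T_0+\frac{\alpha_i D_i}{f_i}\ln\rho_i$. Write $c=\frac{\alpha_i D_i}{f_i}$, $t(\rho)=T_0+c\ln\rho$, which is positive on $\Omega_i$ by the definition of $\rho_{\mathrm{inf}}$. Set $g(\rho)=\frac{D_i\rho}{B\,t(\rho)}$ and $A=\sigma^2/|h_i|^2$. The objective then splits as
\begin{equation*}
E(\rho)=-\kappa\alpha_i D_i f_i^2\ln\rho + A\,t(\rho)\bigl(2^{g(\rho)}-1\bigr).
\end{equation*}
The first term is strictly convex, since its second derivative is $\kappa\alpha_i D_i f_i^2/\rho^2>0$. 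It therefore suffices to show that the second term $H(\rho)=A\,t(\rho)\bigl(2^{g(\rho)}-1\bigr)$ is convex on $\Omega_i$. Strict convexity of the sum then follows.

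For $H$, I would use the perspective-like structure. Define $\phi(s)=2^{D_i s/B}-1$, which is convex and increasing with $\phi(0)=0$. Then $H=A\,t\,\phi(\rho/t)$. Differentiating directly with $t'=c/\rho$ and $t''=-c/\rho^2$, let $u=\rho/t$ and $u'=(t-c)/t^2$. We get
\begin{equation*}
H'=A\bigl[t'\phi(u)+t\,\phi'(u)u'\bigr]=A\Bigl[\tfrac{c}{\rho}\phi(u)+\phi'(u)\tfrac{t-c}{t}\Bigr].
\end{equation*}
Differentiating again, I expect $H''$ to collect into three pieces:
\begin{equation*}
A\Bigl[-\tfrac{c}{\rho^2}\bigl(\phi(u)-u\phi'(u)\bigr)+\phi''(u)\,u'\,\tfrac{t-c}{t}+\phi'(u)\tfrac{c^2}{\rho t^2}\Bigr]
\end{equation*}
(these coefficients are to be verified in the computation). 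The middle piece equals $\phi''(u)(t-c)^2/t^3\ge 0$. The last piece is positive. The first piece is also nonnegative, because convexity of $\phi$ with $\phi(0)=0$ gives $u\phi'(u)\ge\phi(u)$, so $-(\phi-u\phi')\ge0$. Hence $H''\ge 0$, and in fact $H''>0$.

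The main obstacle is carrying out this second differentiation cleanly, since $t$ depends on $\rho$ through a logarithm and the cross terms must be grouped correctly. If the grouping above does not close, I would fall back to the explicit form. With $x=\frac{D_i\ln 2}{B}\frac{\rho}{t}$, I would write $H''$ as $A\,2^{g}$ times a polynomial in $x$, $c/t$ and $1/\rho$, and then show each coefficient is nonnegative. The key facts used would be $e^{x}(x-1)+1\ge 0$, which is the same inequality as $u\phi'\ge\phi$, together with $t>0$. Positivity of $\phi'$ and the strictly positive log term give strictness, and $\Omega_i$ being an interval completes the claim.
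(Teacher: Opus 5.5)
Your plan is correct, but one coefficient in your displayed $H''$ is wrong. The $\phi'$-contributions are $\frac{c(t-c)}{\rho t^2}\phi'(u)$, from $\frac{c}{\rho}\phi'(u)u'$, and $\frac{c^2}{\rho t^2}\phi'(u)$, from differentiating $\frac{t-c}{t}$. Together they equal $\frac{c}{\rho t}\phi'(u)=\frac{c}{\rho^2}u\phi'(u)$, which is already the $u\phi'$ part of your first piece, so your third piece counts it twice. The correct result has two terms:
\[
H''(\rho)=A\Bigl[\tfrac{c}{\rho^2}\bigl(u\phi'(u)-\phi(u)\bigr)+\tfrac{(t-c)^2}{t^3}\,\phi''(u)\Bigr].
\]
Your sign argument goes through unchanged. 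The first term is in fact strictly positive, because $\phi$ is strictly convex and $u>0$, so $H$ is strictly convex on its own.

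The paper never differentiates. It notes that $P(v,t)=t(2^{v/t}-1)$ is the perspective of $2^v-1$ and hence jointly convex. It then uses that $v_i=\frac{D_i}{B}\rho_i$ is affine, $t_i^{\mathrm{comm}}$ is concave in $\rho_i$, and $\partial P/\partial t<0$. The extended composition rule gives convexity of the second term, and $-\ln\rho_i$ supplies strictness.

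Your corrected formula is exactly the second-order chain rule behind that composition rule. The term $\frac{(t-c)^2}{t^3}\phi''(u)$ is the Hessian of the perspective $t\,\phi(\rho/t)$ evaluated along the direction $(1,t')$. The term $\frac{c}{\rho^2}(u\phi'-\phi)$ is $\partial_t P\cdot t''$, with $\partial_t P=\phi(u)-u\phi'(u)\le 0$ and $t''=-c/\rho^2$. So both proofs rest on the same three facts: the perspective is convex, it is monotone in $t$, and $t$ is concave.

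The paper's route is shorter, but it leans on a cited rule whose monotonicity hypothesis must be checked on the relevant domain ($v\ge 0$, $t>0$). Yours is self-contained and yields an explicit $H''$, which could drive a Newton step instead of the paper's bisection. The cost is exactly the bookkeeping you identified as the main risk.
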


\begin{proof}
\emph{Let $P(v, t) = t(2^{v/t}-1)$ be the perspective function of $2^v-1$, which is jointly convex in $(v, t)$ for $t > 0$. The second term of $E_i^{\mathrm{BS}}$ can be rewritten as $\frac{\sigma^2}{|h_i|^2}P(v_i, t_i^{\mathrm{comm}})$, where $v_i = \frac{D_i}{B}\rho_i$ is an affine function, and $t_i^{\mathrm{comm}}$ is strictly concave. Since $\frac{\partial P}{\partial t} < 0$, applying the extended composition rules for convex optimization preserves the convexity. Combined with the first term $-\kappa \alpha_i D_i f_i^2 \ln \rho_i$, which is strictly convex, it rigorously holds that the objective function $E_i^{\mathrm{BS}}$ is strictly convex.} \hfill$\blacksquare$
\end{proof}

Based on Lemma \ref{lemma:convexity}, the subproblem admits a unique global minimum. By denoting $z_i = \frac{\rho_i D_i}{B t_i^{\mathrm{comm}}}$, the optimal unconstrained stationary point $\rho_{i}^{\mathrm{zero}}$ is strictly determined by
\begin{equation}
\begin{split}
&\frac{\sigma^2}{|h_i|^2} \left[ \frac{\alpha_i D_i}{f_i} \left(2^{z_i} - 1\right) + z_i 2^{z_i} \ln 2 \left( t_i^{\mathrm{comm}} - \frac{\alpha_i D_i}{f_i} \right) \right] \\
&= \kappa \alpha_i D_i f_i^2. \label{eq:stationary}
\end{split}
\end{equation}

Since a closed-form expression is intractable, we calculate this stationary point by employing the Bisection method to find the root of the derivative equation over the interval $(0, 1]$. Therefore, the exact optimal semantic compression ratio $\rho_i^*$ and the corresponding optimal transmit power $p_i^*$ are given by
\begin{align}
\rho_i^* &= \min \left( 1, \max \left( \rho_{\mathrm{inf}}, \rho_{\mathrm{Pmax}}, \rho_{i}^{\mathrm{zero}} \right) \right), \label{eq:rho_star}\\
p_i^* &= \frac{\sigma^2}{|h_i|^2} \left( 2^{\frac{\rho_i^* D_i}{B \cdot t_i^{\mathrm{comm}}}}-1\right). \label{eq:P_star}
\end{align}

\subsection{Operating Mode Optimization}

Given $(\boldsymbol{\rho}, \mathbf{p})$ calculated, the minimal information-processing energy $E_{i}^{\mathrm{BS*}}$ can be determined. With the optimal energy-saving potential defined as $\Delta E_i^{\mathrm{save*}} = E_i^{\mathrm{local}} - E_{i}^{\mathrm{BS*}}$, the problem \eqref{eq:all1} can be simplified as
\begin{subequations}\label{subprob_mode}
\begin{align}
    \min_{x_i} \quad & \sum_{i=1}^N (E_i^{\mathrm{local}} + Q) - \sum_{i=1}^N x_i \Delta E_i^{\mathrm{save*}} - \Psi(K), \tag{\theequation} \\
    \text{s.t.} \quad & x_i \le a_i, \quad \forall i \in \mathcal{N}, \\
    & x_i \in \{0,1\}, \quad \forall i \in \mathcal{N}, \\
    & K = \sum_{i=1}^N x_i, \quad \forall i \in \mathcal{N},\\
    & 2 \le K \le M, M = \sum_{i=1}^N a_i, \quad \forall i \in \mathcal{N}.
\end{align}
\end{subequations}

The problem \eqref{subprob_mode} lies in the non-linearity of $\Psi(K)$. However, $\Psi(K)$ depends on $K$, not the mode of the selected agents. By fixing $K$, $\Psi(K)$ degenerates into a constant. The subproblem then collapses into maximizing the sum of independent $\Delta E_{i}^{\mathrm{save*}}$. 

We define a permutation function $\Pi(j)$ that maps the sorted index $j \in \{1,2, \ldots, M\}$ to the original agent index $i \in \mathcal{N}$ for all feasible agents which $a_i = 1$. This function sorts these feasible agents in descending order based on $E_i^{\mathrm{save*}}$, ensuring that $\Delta E_{\Pi(1)}^{\mathrm{save*}} \ge \Delta E_{\Pi(2)}^{\mathrm{save*}} \ge \dots \ge \Delta E_{\Pi(M)}^{\mathrm{save*}}$.

We observe that the operating mode selection of each agent is independent and jointly constrained by the SNR and $K$. Therefore, the greedy selection of the top-$K$ $E_i^{\mathrm{save*}}$ avoids any integrality gap, providing the exact optimal discrete solution for a given $K$. By greedily enumerating the discrete finite space of valid $K \in \{2,3, \dots, M\}$ to find $K^*$, the global optimality of the entire problem is strictly preserved.

By prioritizing agents that offer the largest energy reductions, the optimal collaboration scale $K^*$ is determined by
\begin{equation}
    K^* = \arg\min_{K \in \{2,3, \ldots, M\}} \left\{ \Psi(K) - \sum_{j=1}^K \Delta E_{\Pi(j)}^{\mathrm{save*}} \right\}.
\end{equation}

Consequently, the final optimal mode selection solution $x_i^*$ for the MEAN is expressed as
\begin{equation}
    x_i^* = 
    \begin{cases} 
      1, & \text{if } i \in \{\Pi(1), \Pi(2), \ldots, \Pi(K^*)\}, \\
      0, & \text{otherwise}.
    \end{cases}
\end{equation}

\subsection{Algorithm Analysis}
For semantic compression ratio and transmit power optimization, determining the boundary threshold via Newton's method requires $\mathcal{O}(\log(1/\epsilon_1))$ operations, and the optimal resource variables converge in $\mathcal{O}(M \log(1/\epsilon_2))$ via line searches. For operating mode optimization, sorting the energy-saving potentials mandates $\mathcal{O}(M \log M)$, and the sequential sweep takes $\mathcal{O}(M)$, so the overall complexity is bounded by $\mathcal{O}(M^2 \log M + M^2 \log(1/\epsilon_2))$. Compared to the $\mathcal{O}(2^M)$ exponential complexity of standard MINLP solvers, the proposed algorithm has a lower complexity.

\begin{algorithm}[t]
\caption{Joint Resource Allocation and Mode Selection}
\label{algo:jcsra}
\begin{algorithmic}[1]
\REQUIRE System and agent parameters, initialize minimum energy $E_{\min} \leftarrow +\infty$, optimal scale $K^* \leftarrow 0$.
\STATE Compute SNR $\gamma_i$ and feasibility indicator $a_i$ to identify the $M$ feasible agents.
\FOR{each feasible agent $i$ ($a_i = 1$)}
    \STATE Compute optimal $\rho_i^*$, $p_i^*$, and $\Delta E_i^{\mathrm{save*}}$.
\ENDFOR
\STATE Sort feasible agents in descending order of their $\Delta E_i^{\mathrm{save*}}$.
\FOR{cluster size $K = 2$ to $M$ }
    \STATE Assign BS-assisted mode ($x_i=1$) to top-$K$ agents, compute $E_{\mathrm{sys}}^*(K)$, and update $E_{\min}$ and $K^*$ if a lower energy is found.
\ENDFOR
\RETURN Optimal $K^*$, $E_{\min}$, and $\{x_i^*, \rho_i^*, p_i^*\}_{i=1}^N$.
\end{algorithmic}
\end{algorithm}

\section{Simulation Results}\label{Sec:sr}
In the simulation, the simulation parameters used in the paper are summarized in table.~\ref{tab:sim_params}. For comparison, we evaluate the proposed algorithm (labeled as 'Proposed') against four baselines: a) a classical heuristic where the agent select the BS-assisted collaborative mode while satisfying the SNR threshold (labeled as 'SNR-Based'), b) all agents in local computing mode (labeled as 'Local Only'), c) all agents transmit raw data directly without semantic communication (labeled as 'No SemCom'), and d) all agents set the $p=0.5$ W and do not optimize it (labeled as 'Fixed Tx Power').

\begin{table}[t]
  \centering
  \caption{List of Simulation Parameters}                                                                                         \label{tab:sim_params}
  \begin{tabular}{lc | lc}                                                                                                \toprule
  \textbf{Parameter} & \textbf{Value} & \textbf{Parameter} & \textbf{Value} \\
  \midrule
  $B$ & $1$ MHz & $d_{\min}$ & $50$ m \\
  $P_{\max}$ & $1$ W & $d_{\max}$ & $1000$ m \\
  $\sigma^2$ & $4 \times 10^{-11}$ W & $N_{\text{mc}}$ & $1000$ \\
  $\gamma_{\text{th}}$ & $1$ & $\alpha$ & $10$ \\
  $T_0$ & $0.7$ s & $f$ & $1$ GHz \\
  $Q$ & $0.1$ J & $\kappa$ & $10^{-28}$ \\
  $\beta$ & $0.4$ & $\tau$ & $100$ \\
  $\xi$ & $0.008$ & $D$ & $10$ Mbits \\
  $N$ & $15$ & $\rho_{\min}$ & $0.1$ \\
  \bottomrule
  \end{tabular}
  \end{table}

Fig.~\ref{fig.N} illustrates the total energy consumption versus the number of agents. It is observed that the total energy consumption increases with the increase of the number of the agents, and the proposed algorithm achieves the best performance. This is due to the benefits of semantic communication and collaborative strategy. We show the total energy consumption versus the data size in Fig.~\ref{fig.D}. The total energy consumption increases with the data size for all algorithms since more data needs to be computed and more communication energy of the agents is used to satisfy the latency constraints. The total energy consumption versus the time constraint is presented in Fig.~\ref{fig.T0}. It is seen that the total energy consumption of the proposed algorithm and the SNR-Based algorithm decrease with the maximal latency. This is because relaxing latency constraints allow for longer computation and communication times, thereby reducing the energy consumption of semantic computation and communication transmission.

\begin{figure*}[t]
    \centering
    \vspace{-2em}
    \subfigure{
        \begin{minipage}{0.33\textwidth}
            \centering
            \includegraphics[width=1\textwidth]{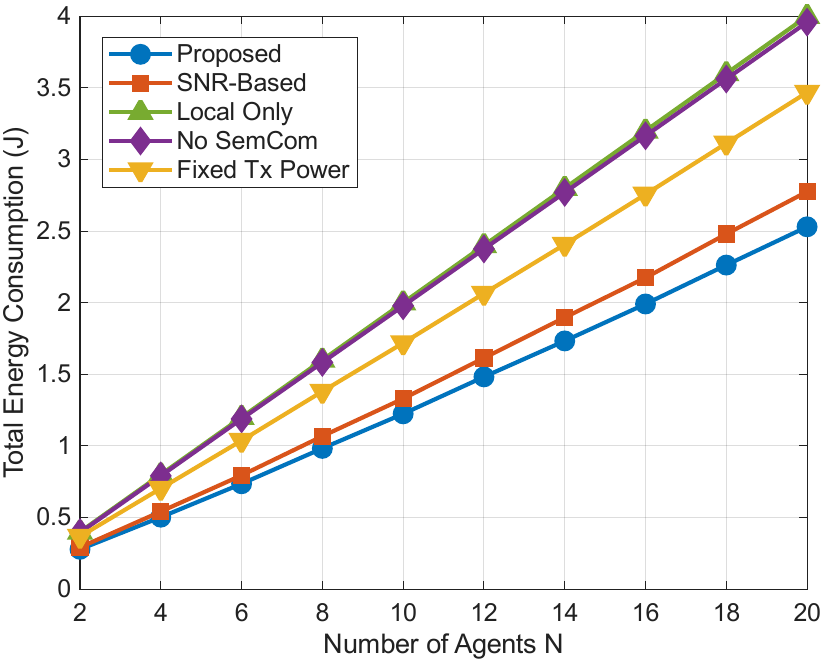}
            % \caption*{(a)}
            \label{fig.N}
    \end{minipage}}
    \hspace{-4mm}
    \subfigure{
        \begin{minipage}{0.33\textwidth}
            \centering
            \includegraphics[width=1\textwidth]{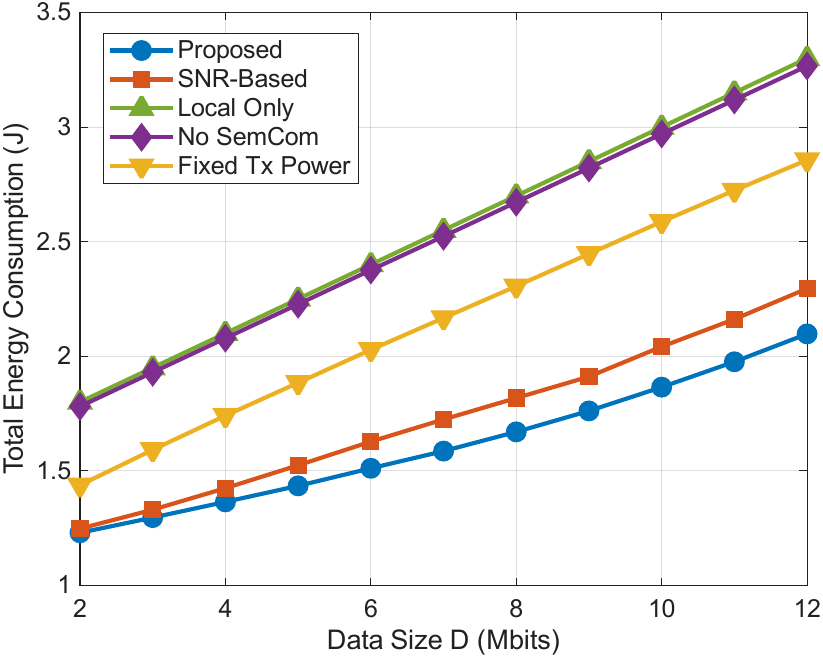}
            % \caption*{(b)}
            \label{fig.D}	
        \end{minipage}}
        \hspace{-4mm}
        \subfigure{
            \begin{minipage}{0.33\textwidth}
                \centering
                \includegraphics[width=1\textwidth]{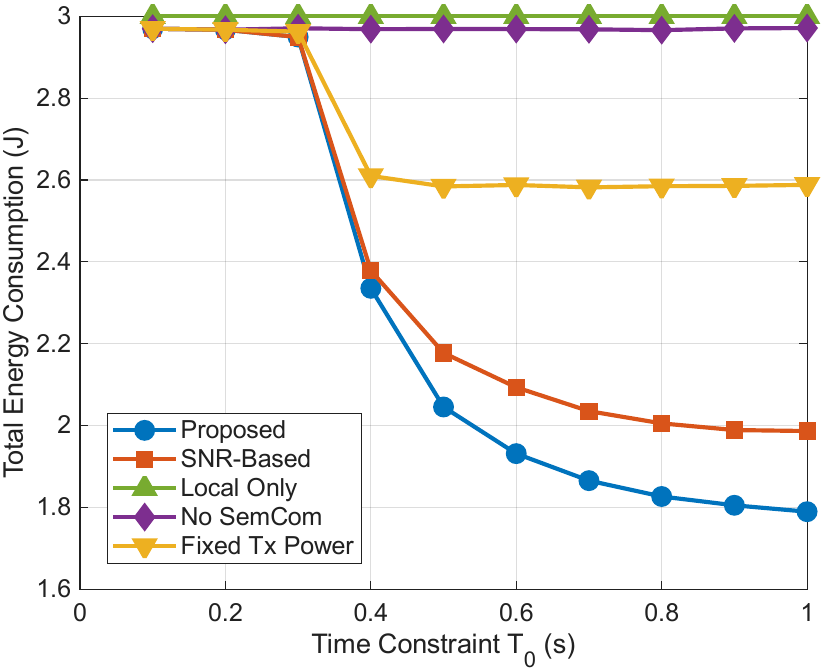}
                % \caption*{(c)}
                \label{fig.T0}	
        \end{minipage}}
    \vspace{-1.4em}
    % \captionsetup{justification=raggedright,singlelinecheck=false}
    \caption{Total energy consumption versus: (a) Number of agents $N$, (b) Data size $D$, (c) Time constraint $T_0$.}
    \label{fig.sim} 
    \vspace{-1em}
\end{figure*}

\section{Conclusion}\label{Sec:c}
In this letter, we have investigated the computation and communication resource allocation and mode selection problem for MEAN over wireless communication. The semantic compression ratio, transmit power, and operating mode are jointly optimized to maximize the system energy efficiency. To solve this problem, we have proposed two-stage numeration algorithm with low complexity. Numerical results have shown that the proposed algorithm scheme outperforms benchmark schemes in terms of energy efficiency, especially for large number of agents and large data size.

\bibliographystyle{IEEEtran}
\bibliography{ref}

\end{document}